\documentclass[%
 jmp,
 amsmath,amssymb,
longbibliography, 
]{revtex4-1}

\usepackage[paperwidth=210mm,paperheight=297mm,centering,hmargin=3cm,vmargin=2.5cm]{geometry}

\usepackage{lineno}

\usepackage{amsthm}

\usepackage{dsfont}
\usepackage{enumitem} 
\usepackage[dvipsnames]{xcolor}
\usepackage{todonotes}
\usepackage[utf8]{inputenc}    
\usepackage[T1]{fontenc}       

\usepackage{csquotes}


\usepackage[english]{babel}
\selectlanguage{english}
\usepackage{natbib}


\DeclareMathOperator{\rk}{rk}

\usepackage{mathtools}
\DeclarePairedDelimiter\bra{\langle}{\rvert}
\DeclarePairedDelimiter\ket{\lvert}{\rangle}
\DeclarePairedDelimiterX\braket[2]{\langle}{\rangle}{#1 \delimsize\vert #2}

\theoremstyle{plain}       
\newtheorem{theorem}{Theorem}
\newtheorem{lemma}[theorem]{Lemma}
\newtheorem{proposition}[theorem]{Proposition}
\newtheorem{corollary}[theorem]{Corollary}

\theoremstyle{definition}
\newtheorem{remark}[theorem]{Remark}
\newtheorem{definition}[theorem]{Definition}
\newtheorem{example}[theorem]{Example}

\begin{document}

\title[]{ Stimulated Raman adiabatic passage-like protocols for amplitude transfer generalize to many bipartite graphs}

\author{Koen Groenland}

\affiliation{QuSoft and CWI, Science Park 123, 1098 XG Amsterdam, the Netherlands \\ \&{} Institute of Physics, University of Amsterdam, Science Park 904, 1098 XH Amsterdam, the Netherlands}
\email{koen.groenland@gmail.com}

\author{Carla Groenland}
\affiliation{Mathematical Institute, University of Oxford, Andrew Wiles Building, Radcliffe Observatory Quarter (550), Woodstock Road, Oxford OX2 6GG, United Kingdom}

\author{Reinier Kramer}
\affiliation{ Max-Planck-Institut f\"ur Mathematik, Vivatsgasse 7, 53111 Bonn, Germany 
\vspace{0.6cm}
}

\date{17 June, 2020}

\begin{abstract}
Adiabatic passage techniques, used to drive a system from one quantum state into another, find widespread application in physics and chemistry. We focus on techniques to spatially transport a quantum amplitude over a strongly coupled system, such as STImulated Raman Adiabatic Passage (STIRAP) and Coherent Tunnelling by Adiabatic Passage (CTAP). Previous results were shown to work on certain graphs, such as linear chains, square and triangular lattices, and branched chains. We prove that similar protocols work much more generally, in a large class of (semi-)bipartite graphs. In particular, under random couplings, adiabatic transfer is possible on graphs that admit a perfect matching both when the sender is removed and when the receiver is removed. Many of the favorable stability properties of STIRAP/CTAP are inherited, and our results readily apply to transfer between multiple potential senders and receivers. We numerically test transfer between the leaves of a tree, and find surprisingly accurate transfer, especially when straddling is used. 
Our results may find applications in short-distance communication between multiple quantum computers, and open up a new question in graph theory about the spectral gap around the value $0$. 
\end{abstract}

\maketitle

{
\makeatletter
\def\l@subsection#1#2{}
\def\l@subsubsection#1#2{}
\makeatother

\tableofcontents
}

\section{Introduction}
STImulated Raman Adiabatic Passage (STIRAP) is a technique typically applied in molecular and atomic physics, where it is used to transfer some internal state $\ket{1}$ to another state $\ket{3}$, by coupling both of these states to some intermediate state $\ket{2}$ by two tuned laser pulses \cite{Gaubatz1990}. An important feature is that state $\ket{2}$ is minimally populated, making the evolution largely insensitive to decoherence due to the intermediate state \cite{Vitanov2017}. Such state transfer protocols have various applications, such as the preparation of useful quantum states, performing coherent quantum logic gates, or sending quantum information between spatially separated agents. STIRAP, in particular, is now widely adopted in fields where accurate control of quantum states is vital, such as high precision measurement \cite{Kasevich2002,Kotru2015}, studies of atoms and molecules \cite{Kral2007,Stellmer2012,Petrosyan2015,Moses2017,Ciamei2017}, and quantum information processing \cite{Pachos2002,Troiani2003,Paspalakis2004,Timoney2011,Koh2013}. 

The formalism has been generalized to work on systems where some (odd) $N$ states are coupled in the form of a linear chain, allowing transfer between the endpoints of the chain \cite{Malinovsky1997}. A mathematically equivalent protocol can be used to spatially displace quantum amplitudes. In 2004, two independent works proposed state transfer of quantum particles over linear chains, by tuning the hopping strengths instead of laser fields: Ref. \cite{Eckert2004} considered neutral atoms in optical lattices, whilst Ref. \cite{Greentree2004} addressed electrons tunneling between quantum dots. The latter introduced the name Coherent Tunneling by Adiabatic Passage (CTAP), which we will also use to denote spatial transfer. Apart from particle tunneling, the same model applies to ferromagnetic spins under XX interaction \cite{Ohshima2007}, where a single spin excitation can be adiabatically transferred.

With the advent of quantum information processing, accurate control and high-fidelity qubit transport in increasingly large systems have become an important scientific challenge \cite{DiVincenzo2000,Preskill2018}. Whereas a large amount of work can be found on transfer over a linear chain of length $3$ or $N$ \cite{Vitanov2017, Menchon-Enrich2016}, little is known about adiabatic transfer in more general systems. Notable exceptions are Refs. \cite{Bradly2012,Longhi2014}, which consider square and triangular grids, and Ref. \cite{Greentree2006}, which addresses multiple parties dangling on a line, each of whom could send or receive the quantum state. Other works, such as Refs. \cite{Chen2013,Batey2015}, describe a variation where the chain splits into multiple paths or branched endpoints. These protocols are shown to work by a clever mapping back to the original protocol on the chain.

We present a completely different approach to find more general configurations that allow a similar transfer protocol, by describing a system's interactions in the language of graphs: the vertices represent basis states and edges represent interactions. We look at bipartite graphs, where the basis states can be separated into two sets $V_1$ and $V_2$, such that each state interacts only with states outside its own set. 
If the two sets differ in size by one, then amplitude transfer between states in the bigger set may be possible. We can guarantee successful transfer when certain graph properties are satisfied, as made precise in Theorem~\ref{thm:transfer}. 

Interestingly, our approach naturally provides a means to transfer amplitude to one out of multiple potential receivers, generalizing Ref. \cite{Greentree2006}. We find that the final receiver need not yet be known when starting the protocol, which could be an advantage in quantum information processing. 

These results advance the fields of STIRAP and spatial transfer in two ways. Firstly, they open the way to practical adiabatic passage in more general systems. Secondly, they shed light on possible perturbations in conventional STIRAP and their effect: we find that many perturbations, as long as they satisfy our assumptions, do not cause a qualitatively different effect on the state's evolution during the protocol. 

\sloppy{Our treatment of bipartite graphs is reminiscent of the celebrated Morris-Shore (MS) transformation \cite{Morris1983}.} The transformation finds a unitary map $A$ on the part $V_1$ and a unitary map $B$ on part $V_2$, such that the system decomposes into a set of decoupled two-level systems, and a set of uncoupled states. Similar to the setting of MS, we focus on the uncoupled states, which are called dark states. Our contribution is distinct from work related to MS transformations, due to the focus on adiabatic transfer techniques, in which the MS transformation would continuously change in time. Therefore, it is not immediately clear how MS could guarantee that our adiabatic state remains nondegenerate, and we choose to resort to other techniques. 

Our work is also closely related the field of perfect state transfer (PST), which adresses the same goal of transfer between two states $\ket{a}, \ket{b}$ in general graphs \cite{Godsil2012}. However, PST is concerned with \emph{quenches} such that $ | \bra{b} e^{- i H t} \ket{a} | = 1$ for a time-independent Hamiltonian $H$. Therefore, PST is typically faster than adiabatic transfer, but it puts stringent constraints on the precise interaction strengths.

The paper is organized as follows. In Sec.~\ref{sec:conventional}, we review the conventional STIRAP and CTAP protocol, after which we present our main result on more general graphs in Sec.~\ref{sec:generalizing}. We then discuss the applicability in real-world systems in Sec.~\ref{sec:applications}, and methods to obtain graphs that satisfy our assumptions in Sec.~\ref{sec:viablegraphs}. We numerically test the scaling of the adiabatic gap in various graphs, and the fidelity of our protocol, in Sec.~\ref{sec:numerics}, and finish with a conclusion in Sec.~\ref{sec:conclusion}.

\section{Conventional STIRAP}
\label{sec:conventional}
As its name implies, STIRAP makes essential use of the Adiabatic Theorem\cite{Born1928}, which states that if a system starts out in an eigenstate of the Hamiltonian whose eigenvalue is isolated, and the Hamiltonian changes slowly, the system remains in the same instantaneous eigenstate.\par
More precisely, suppose a Hamiltonian $H(s)$, depending smoothly on time $s \in [0,1]$, has a smoothly varying basis of instantaneous eigenfunctions $\ket{ \phi_k (s)}$, with eigenvalues $E_k(s)$. Let $ \widetilde{H}(t) = H(t/T)$ be the time-rescaled Hamiltonian. Let $\ket{ \psi (t)} = \sum_k c_k(t) \ket{ \phi (t/T)} $ be the solution of the Schr\"odinger equation
\begin{linenomath*}
\begin{equation*}
    i \hbar \frac{d}{d t} \psi (t) = \widetilde{H}(t) \psi (t)\,
\end{equation*}
\end{linenomath*}
with $ c_k(0) = \delta_{1k}$. 
If, for all $s$,  $E_1(s) - E_k(s)$ is bounded from below by $A_k$, and all $ \braket{\phi_l(s)}{\dot{\phi}_1(s)} $ are bounded from above by $ Q$, then\cite{Born1928}
\begin{linenomath*}
\begin{equation}
   | c_k(T) - \delta_{1k} |\propto  \frac{Q}{A_kT}\,.
   \label{eq:adiabatic}
\end{equation}
\end{linenomath*}
In other words, the difference between the instantaneous eigenstate of $\widetilde{H}(t)$ and the evolved state $\ket{\psi(t)}$ scales inversely with the time taken for the change and with the energy gap. This difference can be made arbitrarily small by choosing a sufficiently large $T$.

\begin{figure}[t]
\centering
\includegraphics[width=\linewidth]{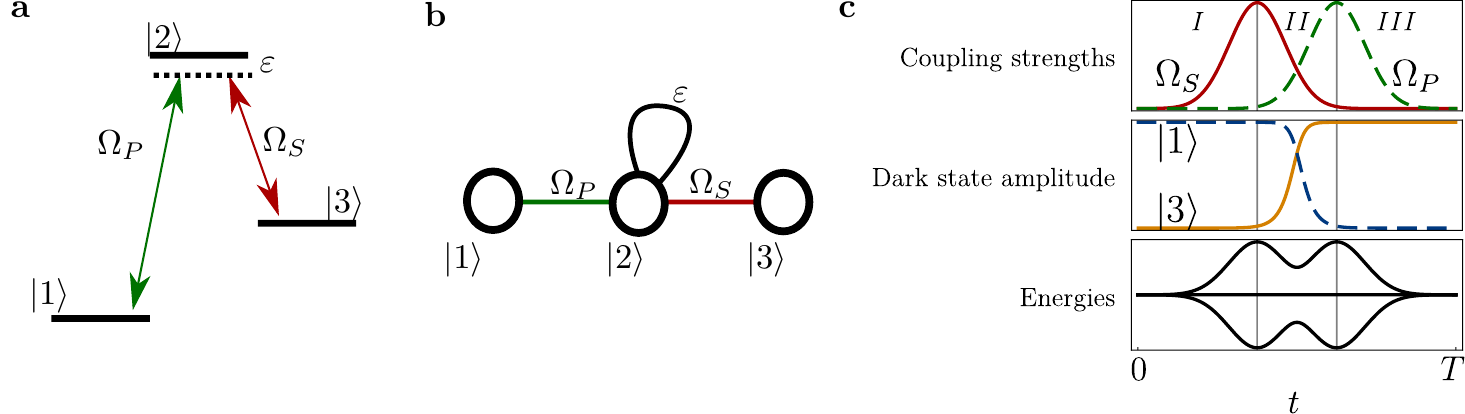}
\caption{The conventional STIRAP/CTAP protocol on a three-site $\Lambda$ system. \textbf{(a)} The energy diagram of the three states, coupled by the Stokes (S) and Pump (P) lasers, also represented as a graph in \textbf{(b)}. \textbf{(c)} Stacked plot showing the laser amplitudes, state amplitudes, and energies (eigenvalues $\lambda$) as a function of time, in arbitrary units. Stages I and III involve turning the couplings on/off, whereas stage II constitutes the relevant adiabatic driving part which transfers amplitude from state $\ket{1}$ to $\ket{3}$ as amplitudes $\Omega_S$ and $\Omega_P$ are slowly adjusted relative to each other. Reproduced from \emph{Quantum protocols for few-qubit devices}, ILLC Dissertation Series (University of Amsterdam, 2020).}
\label{fig:stirap}
\end{figure}
The conventional STIRAP protocol (Fig.~\ref{fig:stirap}) deals with a three-dimensional quantum system, consisting of eigenstates $\{\ket{j}\}_{j=1}^3$ of some background Hamiltonian. To transfer amplitude from $\ket{1}$ to $\ket{3}$, a sequence of two laser pulses is applied: the Stokes pulse coupling $\ket{2} \leftrightarrow \ket{3}$, and the Pump pulse coupling $\ket{1} \leftrightarrow \ket{2}$. Throughout this work, we consider only the interaction picture and assume the rotating wave approximation to hold. The system's Hamiltonian then becomes
\begin{linenomath*}
\begin{align}
H = \begin{pmatrix}
0 & \Omega_P(t) & 0 \\
\Omega_P(t) & \varepsilon & \Omega_S(t) \\
0 & \Omega_S(t) & 0
\end{pmatrix}.
\label{eq:Hstirap}
\end{align}
\end{linenomath*}
Here, $\Omega_{S/P}$ denotes the Rabi frequency (amplitude) of the Stokes and Pump lasers, respectively, and $\varepsilon$ absorbs the off-resonances, assuming both are equal in size. 
One can check that one instantaneous eigenstate of $H$ is the zero energy `dark state' $\ket{z}$ given by 
\begin{linenomath*}
\begin{align*}
\ket{z(t)} = \frac{1}{\mathcal{N}} \begin{pmatrix}
\Omega_P^{-1}(t) \\ 0 \\ -\Omega^{-1}_S(t)
\end{pmatrix},
\end{align*}
\end{linenomath*}
where $\mathcal{N}$ denotes the normalization. The dark state $\ket{z}$ has precisely the property that it transitions from $\ket{1}$ to $\ket{3}$ as $\Omega_S$ is gradually diminished while $\Omega_P$ is increased. 
Note the counter-intuitive order of the pulses, as indicated in Fig.~\ref{fig:stirap}. 
A key property of STIRAP is that, under ideal circumstances, the excited state $\ket{2}$ is never populated during this process, hence the protocol is independent of decoherence due to emission from this state. 
Thanks to this, and the inherent stability of adiabatic methods \cite{Childs2001}, the protocol is relatively stable to experimental imperfections, and is broadly adopted in practice \cite{Vitanov2017}. 

The setting where quantum particles can tunnel between three adjacent sites is mathematically equivalent to Eq.~\ref{eq:Hstirap}, where the parameters $\Omega$ now take the role of tunneling amplitudes. The same protocol can then be applied, leading to transfer of the particle wavefunction, as is the case in CTAP.

\section{Generalizing STIRAP}
\label{sec:generalizing}
We observe that a key property of STIRAP and CTAP is the existence of a unique zero-energy eigenstate at all times, and that this state is localizable by lowering couplings incident to a particular site. This leads us to our main question: which other physical configurations pertain \emph{precisely} one zero eigenvector, even when uncoupling a certain site?

Note that the adiabatic theorem does not require the eigenvalue to be zero. Rather, it is an essential ingredient in our proofs, and it simplifies tracking the dynamical phase in experiments. 

We capture the more general configurations in the language of (finite) \emph{weighted graphs} $G = (V,E,w)$. Here, the collection of vertices $V = \{v_j \}_{j=1}^{\dim(\mathcal{V})}$ corresponds to a set of basis states $\{ \ket{v_j} \}_{j=1}^{\dim(\mathcal{V})}$ of Hilbert space $\mathcal{V}$. 
Two vertices $v,u \in V$ are connected by an edge $uv \in E$ if and only if an interaction that couples states $\ket{u}$ and $\ket{v}$ can be applied. The weights $w : E \rightarrow \mathbb{C}$ assign a complex amplitude to each of the interactions. Weights evaluated on non-existent edges are zero: $w_{uv} = 0  $ for all $uv \not \in E$.  
In the context of CTAP, the vertices should be interpreted as sites for the particle, and the edges indicate possible tunneling of the particle. In the context of STIRAP, vertices are energy levels, and edges are possible couplings by laser fields.

The \emph{adjacency matrix} $A_G$ of a graph is then defined as the matrix of weights, with matrix elements $(A_G)_{uv} = w_{uv}$. We impose hermiticity through $w_{uv} = w^*_{vu}$.
For computational simplicity, we take the adjacency matrix to be constant (we consider it as a background), and define the \emph{control Hamiltonian} $H_G$ for a given graph $G$ by
\begin{linenomath*}
\begin{align*}
H_G(t) = \sum_{u,v \in V} f_{uv}(t) w_{uv} \ket{u} \bra{v}\,, && f_{uv}(t) = f^*_{vu}(t) \,.
\end{align*}
\end{linenomath*}
The graph $G$ from which $H_G$ is derived will be called the \emph{interaction graph}, which restricts the allowed interactions in $H_G$. 

In this definition of the control Hamiltonian, we assume arbitrary time-dependent control over each allowed interaction, by tuning the \emph{controls} $f_{uv}(t)$. In the following, we assume that the controls $f_{uv}(t)$ are continuous functions of time, as required for adiabatic evolution.  
Moreover, to avoid dealing with quickly oscillating laser fields, we assume that an appropriate rotating frame is considered and that off-resonant fields are neglected through a rotating wave approximation. This is important later, as multiple $f_{uv}$ should not become $0$ simultaneously. Hence, each $f_{uv}(t)$ should be a slowly-changing function of time, representing for example the envelope of a laser's amplitude.

Thanks to the mapping to graphs, we can use various notions from graph theory. We denote with $G-v$ the graph $G$ in which the vertex $v$ and all the edges incident to $v$ are removed. 
\begin{definition}
A \emph{bipartite graph} has a vertex set $V$ which can be separated into two independent subsets $V_1, V_2$ such that each edge $uv \in E$ must run \emph{between} $V_1$ and $V_2$ (that is, $u \in V_1$ and $v \in V_2$ or vice-versa).\par
A \emph{semi-bipartite graph with parts $V_1$ and $V_2$} \cite{Xu2010,Al-Kofahi2009} is a bipartite graph in which edges within $V_2$ are allowed (including self-loops), but edges within $V_1$ are still prohibited. For example, the graph in Fig.~\ref{fig:stirap} is semi-bipartite with $V_1 = \{ \ket{1}, \ket{3} \}$, but not bipartite unless $\varepsilon = 0$.
\end{definition}
Note that for a connected bipartite graph, the decomposition $ V = V_1 \sqcup V_2$ is determined uniquely (up to interchanging $V_1$ and $V_2$), while this is almost never the case for semi-bipartite graphs: any vertex in $ V_1$ may be moved to $V_2$. Hence, the decomposition is an essential part of the data. However, for our results, we want to take $|V_1| = |V_2|+1$, which means we cannot easily move points from $V_1$ to $V_2$.

We let $\mathcal{V}$ denote the vector space spanned by the states $\ket{v}$ corresponding to the vertices $v$ in $V$. Likewise, we use $\mathcal{V}_1, \mathcal{V}_2$ to denote the subspaces corresponding to subsets $V_1$, $V_2$. We order the basis of $\mathcal{V}$ by first stating the elements of $\mathcal{V}_1$ and then the elements of $\mathcal{V}_2$. In this basis, the interaction graph has the form
\begin{linenomath*}
\begin{align}
A_G=\begin{pmatrix} 0 & B \\ B^T & C \end{pmatrix}\,,
\label{eq:bipgraph}
\end{align}
\end{linenomath*}
where $B$ is a matrix of size $|V_1| \times |V_2|$ and $C$ has size $|V_2| \times |V_2|$. We will mostly use this form of $A_G$ throughout this work.

\begin{definition}
We use \emph{commensurate couplings} to denote the choice of couplings $f_{uv}(t)$ such that 
\begin{linenomath*}
\begin{align*}
f_{v u}(t) &= f_v(t) && \forall\, u \in V_2,v\in V_1 \,;\\
f_{v u}(t) &= 1  && \forall\,  u,v \in V_2\,.
\end{align*}
\end{linenomath*}
In other words, for each vertex $v \in V_1$, the incident couplings are changed proportionally, whereas all couplings within $V_2$ have to be equal to one.
\end{definition}

Note that, because we consider semi-bipartite graphs, the above definition covers the controls for all edges.
In such cases, with the interaction graph given in the form of Eq.~\ref{eq:bipgraph}, we may write
\begin{linenomath*}
\begin{equation}\label{Hamiltonianasconjugateadjacency}
H_G(t) = F(t) A_G F^*(t)\,,
\end{equation} 
\end{linenomath*}
where $F(t) = \mathop{\textup{diag}} (f_1(t), \dotsc , f_{|V_1|}(t), 1, \dotsc 1)$.

We are now ready to state our main result. Consider a set of parties (vertices) $P\subseteq V_1$ located on a graph, who want to send a quantum state to each other. This turns out to be possible with a control Hamiltonian $H_G$, under certain graph restrictions, as made precise below. 

\begin{theorem}
\label{thm:transfer}
Let $G = (V,E,w)$ be a connected, weighted, semi-bipartite graph with parts $V_1$ and $V_2$. Let $P = \{p_j\}_{j=1}^k \subseteq V_1$. We assume that
\begin{itemize}
\item[1.] $|V_1| = |V_2| + 1$;
\item[2.] Either of the following:
\begin{itemize}
\item[2a.] For all $p_j$, $\det(A_{G-p_j}) \neq 0$;
\item[2b.] $A_G$ has a unique zero eigenvector, which has nonzero amplitude on each $p_j$.
\end{itemize}
\end{itemize}
Then, for any $a, b \in P$, the following choice of commensurate couplings are such that $H_G(t)$ adiabatically transfers amplitude from $a$ to $b$ in total time $T$:
\begin{linenomath*}
\begin{equation}
\begin{split}
f_a(0) &= 0 \,; \\
f_b(T) &= 0 \,; \\
f_v(t) &\neq 0 \text{ for all $v \not \in P$;}\\
\text{No two }&\text{$f_v(t) $ may be zero simultaneously.}
\end{split}
\label{eqn:control_scheme} 
\end{equation}
\end{linenomath*}
Hence, the system exhibits a unique, continuously changing zero eigenstate, which is localized at $a$ for $t=0$ and at $b$ for $t=T$.
\end{theorem}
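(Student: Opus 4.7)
The plan is to exploit the factorization $H_G(t) = F(t) A_G F(t)^*$ from Eq.~(\ref{Hamiltonianasconjugateadjacency}) and to track the kernel of $H_G(t)$ throughout the evolution. First I would verify that conditions 2a and 2b are equivalent under condition 1, so that we may use either interchangeably. Condition 1 forces $\ker B^T$ to have dimension at least $1$, so $\ker A_G$ is always nontrivial; a short block-matrix computation on the form Eq.~(\ref{eq:bipgraph}) shows that $\ker A_G$ is one-dimensional precisely when $B$ has full column rank, in which case every kernel vector has the form $(\eta,0)$ with $\eta \in \ker B^T$. For 2a $\Rightarrow$ 2b: if $\det A_{G-p}\neq 0$, then any $v\in\ker A_G$ with $v_p = 0$ restricts to a null vector of $A_{G-p}$, forcing $v = 0$; hence every nonzero kernel vector has nonzero $p$-component, which also pins $\ker A_G$ to be one-dimensional. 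For 2b $\Rightarrow$ 2a: the row-$p$ deletion $B^{(p)}$ is singular iff $e_p \in \mathrm{image}(B) = (\ker B^T)^\perp$, iff $\eta_p = 0$; so 2b gives $B^{(p)}$ invertible for every $p \in P$, and a Schur-complement computation then yields $\det A_{G-p} \neq 0$.

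Next I would show that $H_G(t)$ has a one-dimensional zero eigenspace at every $t \in [0,T]$. By the control scheme~(\ref{eqn:control_scheme}), at each $t$ either all $f_v(t)$ are nonzero, or exactly one $f_p(t) = 0$ with $p \in P$. In the first case $F(t)$ is invertible, so $\ker H_G(t) = F(t)^{-*} \ker A_G$ is one-dimensional by 2b. In the second case, the $p$-th row and column of $H_G(t)$ vanish, so $\ket{p}$ is a zero vector; the complementary block is $F'(t) A_{G-p} F'(t)^*$ with $F'(t)$ invertible and $A_{G-p}$ invertible by 2a, so no further zero vectors appear.

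I would then exhibit the explicit continuous family obtained by clearing denominators in $F(t)^{-*}(\eta,0)$:
\begin{equation*}
\ket{z(t)} = \frac{1}{\mathcal{N}(t)} \sum_{v \in V_1} \eta_v \prod_{u \in V_1 \setminus \{v\}} f_u^*(t)\,\ket{v}.
\end{equation*}
This expression is polynomial, hence smooth, in the $f_v^*(t)$; when all $f_v(t) \neq 0$ it is a scalar multiple of $F(t)^{-*}(\eta,0)$, and when only $f_p(t) = 0$ only the $v = p$ term survives, giving $\ket{z(t)} \propto \eta_p \ket{p}$, nonzero by 2b. In particular $\ket{z(0)} \propto \ket{a}$ and $\ket{z(T)} \propto \ket{b}$. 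To close the argument with the adiabatic theorem of Eq.~(\ref{eq:adiabatic}), I would observe that the eigenvalues of $H_G(t)$ vary continuously in $t$, only one vanishes at each point, so on the compact interval $[0,T]$ the nonzero eigenvalues are uniformly bounded away from zero by some $A > 0$; differentiating the smooth formula above gives a uniform bound $Q$ on $|\braket{\phi_l(t)}{\dot z(t)}|$, and the transfer fidelity from $\ket{a}$ to $\ket{b}$ then tends to $1$ as $T \to \infty$.

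I expect the hardest part to be the first step: correctly handling the block structure to establish the equivalence of 2a and 2b, and identifying the kernel of $H_G(t)$ at the times when $F(t)$ is singular, since naively $\ker H_G = F^{-*} \ker A_G$ is not defined there. Once the polynomial formula for $\ket{z(t)}$ is found, the remaining continuity and uniform-gap arguments are routine applications of compactness and the adiabatic theorem.
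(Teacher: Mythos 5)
Your proposal is correct and follows essentially the same route as the paper: the equivalence of \emph{2a} and \emph{2b} (the paper's Proposition~\ref{prop:2conditionsequiv}), conjugation of $\ker A_G$ by $F(t)$ when all controls are nonzero, invoking $\det(A_{G-p})\neq 0$ when a single $f_p$ vanishes, and a compactness argument to bound the gap uniformly before applying the adiabatic theorem. The one refinement worth keeping is your denominator-cleared formula $\ket{z(t)}\propto\sum_{v\in V_1}\eta_v\prod_{u\in V_1\setminus\{v\}}f_u^*(t)\,\ket{v}$, which makes the continuity of the dark state across the times where $F(t)$ is singular explicit, whereas the paper treats that limit more implicitly.
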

Before we prove this theorem, we would like to analyse the statement. The proof is given after Remark~\ref{Implicationatob}.

The main importance of Theorem~\ref{thm:transfer} is that there exist large classes of graphs that allow state transfer under a generalized form of the counter-intuitive pulse sequence encountered in STIRAP. Eq.~\ref{eqn:control_scheme} allows abundant freedom in the choice of controls, although the theorem does not say anything about which controls are \emph{optimal} (in the sense that they result in the smallest adiabatic error for a fixed time $T$).  It also does not say anything about the reliability of the control protocol, or about the size of the energy gap (except that it is non-zero). We numerically address gap size and transfer fidelities as a function of graph size in Sec.~\ref{sec:numerics}. 

We also note that Eq.~\ref{eqn:control_scheme} is not exhaustive: there may be a wider class of controls that guarantees state transfer under our assumptions. On the other hand, even if assumptions \emph{1} and \emph{2} of Theorem 3 are satisfied, the controls cannot be \emph{any} function of time: for any nontrivial graph $G$ there exist controls $f_{uv}$ that cause $H_G$ to have a degenerate zero eigenvalue (an example is the case where too many $f_{uv}$ become $0$). We leave investigating the tightness of our results as an open problem.

\begin{remark}
For practical purposes, the only couplings $f_{uv}$ that actually \emph{require} time-dependent control are those directly connected to sender and receiver; controlling any of the other couplings is optional. In fact, the control procedure can be performed locally and sequentially: it is possible to first only change the controls near $a$ and then only those near $b$. An example is the choice
\begin{linenomath*}
\begin{equation*}
    f_v(t) = 
    \begin{cases}
        \min \{ 2t/T,1\} & v =a\,;\\
        \min \{1-2t/T,1\} & v=b\,;\\
        1 &\text{else}
    \end{cases}
\end{equation*}
\end{linenomath*}
In particular, the receiver, $b$, can be chosen after the process has been initialized. 
\end{remark}

\begin{remark}
As seen in the proof, assumptions \emph{1} and \emph{2} are chosen precisely such that the Hamiltonian $H_G(t)$ has exactly one zero eigenvalue at all times.
We show that this gives a non-zero gap, bounded from below uniformly over time.

Also note that in the physics literature, the gap is often addressed in the limit of increasing system size, whereas we assume that a graph has a fixed, finite size.
\end{remark}

Assumption \emph{2} is not very intuitive. Therefore, we will give two approaches to attaining it in Sec. \ref{sec:viablegraphs}, one via perfect matchings, and one reducing graphs by cutting dangling vertices.

The assumptions \emph{2a} and \emph{2b} are equivalent under the assumption of \emph{1}. More precisely, the following proposition holds.

\begin{proposition}\label{prop:2conditionsequiv}
Let $G = (V,E,w)$ be a weighted, semi-bipartite graph with parts $V_1$ and $V_2$, such that $ |V_1| = |V_2| + 1$, and let $ p \in V_1$. Then the following are equivalent:
\begin{itemize}
    \item[a.] $\det (A_{G-p}) \neq 0$;
    \item[b.] $A_G$ has a unique zero eigenvector, which has non-zero amplitude on $p$.
\end{itemize}
\end{proposition}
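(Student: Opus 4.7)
My plan is to reduce both (a) and (b) to invertibility of a single $n\times n$ submatrix, where $n := |V_2|$ so that $|V_1| = n+1$. I would first order the vertices so that $p$ is the last vertex of $V_1$, and split the top-right block $B$ of $A_G$ into an $n\times n$ block $B'$ (indexing $V_1\setminus\{p\}$ against $V_2$) and a single row $b^T$ (corresponding to $p$), with $b \in \mathbb{C}^n$ encoding the edges incident to $p$. Explicitly,
\begin{equation*}
A_G \;=\; \begin{pmatrix} 0_{n\times n} & 0 & B' \\ 0 & 0 & b^T \\ (B')^T & b & C \end{pmatrix}, \qquad A_{G-p} \;=\; \begin{pmatrix} 0_{n\times n} & B' \\ (B')^T & C \end{pmatrix}.
\end{equation*}

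The first step is a lemma: $\det A_{G-p} \neq 0$ iff $B'$ is invertible. This is a short kernel chase. If $(u,v) \in \ker A_{G-p}$, then $B' v = 0$ and $(B')^T u + Cv = 0$; invertibility of $B'$ (hence of $(B')^T$) forces $v = 0$ and then $u = 0$. Conversely, if $B'$ is singular, pick $u\neq 0$ in $\ker (B')^T$; then $(u,0) \in \ker A_{G-p}$. (Equivalently, one obtains $\det A_{G-p} = (-1)^{n}\det(B')^{2}$ via the row operation that clears $C$ when $B'$ is invertible, but this identity is not needed.)

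For (a)$\Rightarrow$(b), assuming $B'$ is invertible, I would solve the kernel equations for $A_G$ directly. A vector $(x_1, x_p, y) \in \mathbb{C}^n \oplus \mathbb{C} \oplus \mathbb{C}^n$ lies in $\ker A_G$ iff $B' y = 0$, $b^T y = 0$, and $(B')^T x_1 + b\,x_p + Cy = 0$. The first equation forces $y = 0$, the second becomes vacuous, and the third determines $x_1 = -(B')^{-T} b\, x_p$. Hence $\ker A_G$ is one-dimensional, spanned by $\bigl(-(B')^{-T} b,\; 1,\; 0\bigr)$, whose $p$-component equals $1\neq 0$.

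For (b)$\Rightarrow$(a), I would argue by contraposition. If $B'$ is singular, pick $u\neq 0$ in $\ker(B')^T$; then $(u, 0, 0)$ is a non-zero kernel vector of $A_G$ with vanishing $p$-component, and cannot be a non-zero scalar multiple of any vector with non-zero $p$-component. So either $\dim\ker A_G \geq 2$ or the unique-up-to-scale kernel vector has zero $p$-component, and either way (b) fails. I do not foresee any genuine obstacle: the argument is a block-matrix kernel computation, and the semi-bipartite hypothesis is exactly what guarantees that the $p$-row/column within $V_1$ vanishes, so removing $p$ cleanly produces the square $n\times n$ block $B'$.
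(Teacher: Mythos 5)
Your proof is correct. The forward-and-backward skeleton is the same as the paper's --- both proofs isolate the square $|V_2|\times|V_2|$ block $B'$ (the paper's $\widetilde{B}$) obtained by deleting the row of $p$, and both prove (b)$\Rightarrow$(a) by contraposition, manufacturing a kernel vector of $A_G$ supported on $V_1\setminus\{p\}$ out of a kernel vector of $(B')^T$. Where you genuinely differ is in (a)$\Rightarrow$(b). The paper argues in two separate steps: uniqueness of the zero eigenvector follows abstractly because the full-rank principal submatrix $A_{G-p}$ forces $\rk A_G \geq |V|-1$ (combined with the a priori existence of a kernel vector supported on $V_1$ from the count $|V_1|=|V_2|+1$), and the non-vanishing of the $p$-component is then obtained by a separate contradiction using the block decomposition around $p$. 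You instead first prove the clean intermediate lemma that $\det(A_{G-p})\neq 0$ is equivalent to invertibility of $B'$, and then solve the kernel equations of $A_G$ outright, exhibiting the kernel as the span of $\bigl(-(B')^{-T}b,\,1,\,0\bigr)$. This buys uniqueness and the non-zero $p$-amplitude in a single computation, plus an explicit closed form for the dark state in terms of the graph data (which the paper never writes down and which could be useful for tracking the transfer phase discussed in Sec.~\ref{sec:applications}). The trade-off is that your route leans on semi-bipartiteness in both directions, since the vanishing top-left block is what decouples the kernel equations, whereas the paper's rank argument for the (a)$\Rightarrow$(b) uniqueness step is the one it flags in Remark~\ref{Implicationatob} as surviving without that hypothesis.
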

\begin{proof}
Let us first show that, thanks to $|V_1| = |V_2|+1$, there must exist a zero-energy eigenvector $\ket{z} = (z_1, 0) \in \mathcal{V}_1$ whose nonzero amplitudes $z_1$ are only located on sites in $V_1$. This holds because in the eigenvalue equation, using the form of Eq.~\ref{eq:bipgraph},
\begin{linenomath*}
\begin{align*}
\begin{pmatrix} 0 & B \\ B^T & C \end{pmatrix} \begin{pmatrix} z_1 \\ 0 \end{pmatrix}  = \begin{pmatrix} 0 \\ B^T z_1 \end{pmatrix} = 0\,,
\end{align*}
\end{linenomath*}
the system of equations $B^T z_1 = 0$ has $|V_1|$ variables and $|V_2|$ constraints, hence it must always have at least one non-trivial solution.

We start with the implication from \emph{a} to \emph{b}.
By the previous argument, the rank of $ A_G $ can be at most $ |V|-1 $. However, as $\det (A_{G-p}) \neq 0$, the submatrix $A_{G-p}$ must be of maximal rank, which is also $ |V|-1$. As the rank of a submatrix gives a lower bound on the rank of a matrix, this shows that $ \rk A_G \geq |V|-1$. Therefore, there is a unique zero eigenvector.\par
Let this eigenvector be $v$, let its component at $p$ be $v_p$, and its components away from $p$ be $\widetilde{v}$ (so $\widetilde{v} $ is a vector with $|V|-1$ components). We can write $A_G $ as a block matrix
\begin{linenomath*}
\begin{equation*}
  A_G = \begin{pmatrix} 0 & b_p \\ b_p^T & A_{G-p} \end{pmatrix} \,, 
\end{equation*}
\end{linenomath*}
where we wrote the component corresponding to $p$ as the first component for simplicity. As $v$ is a zero eigenvector, we get
\begin{linenomath*}
\begin{equation*}
    0 = A_G v = \begin{pmatrix} 0 & b_p \\ b_p^T & A_{G-p} \end{pmatrix} \begin{pmatrix} v_p \\ \widetilde{v}\end{pmatrix} = \begin{pmatrix} b_p \widetilde{v} \\ b^T_p v_p + A_{G-p} \widetilde{v} \end{pmatrix}\,.
\end{equation*}
\end{linenomath*}
If $ v_p =0$, then $ \widetilde{v} \neq 0$, as an eigenvector cannot be zero, but then $A_{G-p}\widetilde{v} \neq 0$, as $\det (A_{G-p}) \neq 0$. This is a contradiction, so we must have $ v_p \neq 0$.\par
Now we prove the implication from \emph{b} to \emph{a} by counterpositive. Hence we assume $\det (A_{G-p})=0$, and show that there exists a zero eigenvector of $A_G$ whose $p$-component is zero. Again, for notational simplicity, we write the component corresponding to $p$ as the first component, so we have
\begin{linenomath*}
\begin{equation*}
    A_{G} = \begin{pmatrix} 0 & B \\ B^T & C\end{pmatrix} = \begin{pmatrix} 0&0 & b_p \\ 0 & 0 & \widetilde{B} \\ b_p^T & \widetilde{B}^T & C \end{pmatrix}\,.
\end{equation*}
\end{linenomath*}
From this, we get
\begin{linenomath*}
\begin{equation*}
    A_{G-p} =  \begin{pmatrix}  0 & \widetilde{B} \\ \widetilde{B}^T & C \end{pmatrix}\,,
\end{equation*}
\end{linenomath*}
where, crucially, the sizes of $ \widetilde{B} $ and $C$ are equal by the assumption $ |V_1| = |V_2| + 1$. Hence,
\begin{linenomath*}
\begin{equation*}
    \det (A_{G-p}) = \pm \det (\widetilde{B} \widetilde{B}^T ) = \pm \det (\widetilde{B})^2\,.
\end{equation*}
\end{linenomath*}
Now, by assumption $ \det (A_{G-p}) =0$, so $ \det (\widetilde{B}^T) = 0$. Therefore, there exists a zero eigenvector $u$ of $\widetilde{B}^T$. If we define $v = (0, u, 0)$, then
\begin{linenomath*}
\begin{equation*}
    A_{G} v = \begin{pmatrix} 0&0 & b_p \\ 0 & 0 & \widetilde{B} \\ b_p^T & \widetilde{B}^T & C \end{pmatrix} \begin{pmatrix}0 \\ u \\ 0 \end{pmatrix} = \begin{pmatrix} 0 \\ 0 \\ \widetilde{B}^T u\end{pmatrix} =0\,,
\end{equation*}
\end{linenomath*}
so we have constructed a zero eigenvector of $A_G$ with zero amplitude on $p$, giving a contradiction.
\end{proof}

\begin{remark}\label{Implicationatob}
In fact, the implication from \emph{a} to \emph{b} goes through even in the case $G$ is not semi-bipartite; the proof does not use this assumption. However, for the other direction, it is essential.
\end{remark}

\begin{proof}[Proof of Theorem~\ref{thm:transfer}]
By the first part of the proof of Proposition~\ref{prop:2conditionsequiv}, there exists a zero-energy eigenvector $\ket{z}$ for any choice of controls.

By construction, the couplings $f_{uv}(t)$ in Eq.~\ref{eqn:control_scheme} are such that at times $0$ and $T$, the respective states $\ket{a}$ and $\ket{b}$ are zero-energy eigenstates. We will argue that, using the given control scheme, the zero-energy subspace is one-dimensional at all times. 

When all controls $f_{v}$ are equal to one, then $H_G = A_H$ and the zero-energy eigenstate $\ket{z}$ is unique, by assumption~\emph{2}.

When the couplings change commensurately, but remain non-zero, the eigenstate $\ket{z}$ changes as
\begin{linenomath*}
\begin{align}
    \ket{z(t)} \propto F(t)^{-1} \ket{z},
    \label{eq:darkstate}
\end{align} 
\end{linenomath*}
as can be seen from Eq.~\ref{Hamiltonianasconjugateadjacency}. Because $F$ is diagonal, $ \ket{z(t)}$ is still located on $V_1$. It is unique, because given any zero eigenvector $\ket{w} $ of $H_G(t)$, $F(t)\ket{w}$ is an eigenvector of $A_G$, hence must be equal, up to scaling, to $ \ket{z}$.

Special care has to be taken when reducing weights to zero. When reducing $f_p ~ (p \in P)$ towards zero, assumption~\emph{2a} guarantees that no zero eigenvectors occur on $G-p$, hence $\ket{p}$ must then be the unique zero eigenstate. 
This shows that any controls $f_v$ satisfying Eq.~\ref{eqn:control_scheme} indeed pertain a unique zero-energy eigenstate, and provide the correct initial and final state at times $t=0$ and $t=T$. 

Because the graph is finite, there are also finitely many eigenvalues for each $t$, and by the above, exactly one of them equals zero. Therefore, there must be a non-zero gap around zero for any fixed $t$. Furthermore, the interval $[0,T]$ is compact and the eigenvalues and the gap depend continuously on the time, and so the gap must achieve its infimum at some $ t_0 \in [0,T]$. Hence, this gap can be bounded uniformly by the gap at $t_0$, a positive number $\varepsilon$, which then bounds the $A_k$ from Equation~\ref{eq:adiabatic}.
This shows that we can use the adiabatic theorem to find that a sufficiently high protocol time $T$ allows state transfer at arbitrary accuracy.

\end{proof}

The unique zero-eigenstate $\ket{z(t)}$ has many favorable stability properties. Its eigenvalue is \emph{exactly} $0$ throughout the whole protocol, independent of changes to $w_{uv}$, as long as the graph remains semi-bipartite. The constant energy makes the state's dynamical phase easy to track. Moreover, it has exactly $0$ amplitude on $\mathcal{V}_2$, which makes it insensitive to any decoherence on sites in $V_2$. The state $\ket{z}$ generalizes the `dark state' of conventional STIRAP and CTAP, inheriting important features that make these protocols attractive for practical purposes. 

One might be concerned that, when reducing all controls $f_{p_j v}$ incident to a certain party $p_j$ to zero, it is hard to maintain the commensurate ratios between the couplings. Luckily, it turns out that in such cases, commensurateness is not essential: the condition $\det(A_{G-p_j}) \neq 0$ guarantees that the zero eigenstate remains unique as long as all other sites remain commensurately coupled. This holds because the rank of $A_G$ must be at least that of $A_{G-p_j}$, which shows that for \emph{any} couplings between $p_j$ and the rest of the graph, there can be at most one zero-energy state. This freedom gives the protocol a convenient stability to imperfect controls.

 The time scale $T$ required by the protocol is determined by the gap in the spectrum around the zero eigenvalue, as opposed to the well-studied gap between the lowest and second lowest energy \cite{spectraofgraphs}. To our best knowledge, little is known about the gap around zero, and characterizing its scaling is an interesting open problem. In Sec.~\ref{sec:numerics} we numerically study the scaling for certain example graphs.

\section{Applications}
\label{sec:applications}
Our main result requires a physical system to obey our conventions of control Hamiltonian $H_G$ for certain graphs $G$, with sufficiently flexible controls $f_{uv}$. The mathematical framework we consider applies to various realistic cases, such as
\begin{itemize}
\item Discrete energy levels coupled by (near-)resonant laser fields, like electronic levels in atoms or molecules, such as typically considered in STIRAP \cite{Vitanov2017}. The lasers can also be off-resonant, as long as each state in $V_2$ has all of its incident couplings at the same off-resonance. Either way, a transformation to the interaction picture, and assumption of the Rotating Wave Approximation are required. 
\item Systems where a quantum particle `hops' between coupled sites, such as electrons caught in quantum dots \cite{Greentree2004,Hensgens2017}, or atoms or atomic condensates trapped in optical potentials \cite{Eckert2004,Graefe2006,Bloch2012}. 
\item An XX-model of interacting spins, of the form 
\begin{linenomath*}
\begin{align*}
H_\text{XX} = \frac{1}{2} \sum_{uv \in E} w_{uv} \left( X_u X_v + Y_u Y_v \right) + h \sum_{u \in V} Z_u,
\end{align*}
\end{linenomath*}
where $\{X_u,Y_u,Z_u\}$ are the Pauli matrices acting on the site $u$, in the sector with a single spin excitation \cite{Ohshima2007}. 
\end{itemize}

The most interesting application might be in quantum information processing. Note that our protocols can transmit quantum information, for example when the transported state represents the position of a quantum particle with internal degrees of freedom, as is the case with CTAP, or when a superposition between a shared vacuum and an excitation on a graph may be made. The latter applies to the XX-model, where an initial state of the form
\begin{linenomath*}
\begin{align*}
\ket{\psi(0)} = \alpha \ket{0}_a \ket{0\ldots 0} + \beta \ket{1}_a \ket{0 \ldots 0}
\end{align*}
\end{linenomath*}
can be initialized locally at site $a$. The first term is an eigenstate of $H_\text{XX}$ and does not change. The second term evolves in an invariant subspace spanned by states with a single spin excitation. The evolution is then described by a Hamiltonian of the form $H_G$, allowing the spin excitation to be transfered to some other location $b$. 

In the context of information transfer, care has to be taken with the additional phase that is picked up throughout the protocol.  As an example, in the XX model described above, the single-excitation subspace amplitude $\beta$ picks up a relative phase $\beta \rightarrow e^{- i h T} \beta$ relative to the vacuum amplitude $\alpha$. Moreover, the transfer protocol itself gives an additional phase to the transferred excitation, as previously observed by Ref. \cite{Greentree2006}. This becomes relevant when dealing with the XX model, or when transporting entangled particles or states. Owing to Eq.~\ref{eq:darkstate}, as long as $f_{uv}(t)$ remain real-valued, the additional phase acquired by the state when transferring from site $a$ to $b$ is equal to $\arg( z_a / z_b )$, where $z_a, z_b$ are elements of the zero-eigenvector $\ket{z}$ of $A_G$. Hence, for some applications, this vector may need to be explicitly calculated once. 

As a potential realistic application, Ref. \cite{Vandersypen2017} observes that individual quantum processors based on quantum dots are limited in size, raising the need for communication between nearby processors. Our results readily generalize the CTAP protocol \cite{Greentree2004} to transfer electrons through a network of quantum dots, and the possibility to use more general graphs may be of great benefit for larger quantum computer architectures. 

Another new application is in a delayed transfer scheme, previously addressed in Ref. \cite{Groenland2019}. The sender $a$ can initialize the system into the dark state $\ket{z}$ and leave it at that, such that any party in $P$ can retrieve the quantum state, at any time they like. This opens the possibility to share unclonable quantum information among many parties without yet knowing which party is required to obtain the information.

\section{Examples of viable graphs}
\label{sec:viablegraphs}

The main assumptions of Theorem~\ref{thm:transfer}, especially requirement \emph{2}, may not be very intuitive, but can be guaranteed in certain cases. In this section, we present two results in this direction. First, we discuss a procedure to generate viable graphs, by iteratively adding or removing dangling vertices. Next, we show that for any graph that allows, for each $p_j$, a perfect matching when a party $p_j$ is removed, our assumptions are satisfied with probability $1$ when the weights $w_{uv}$ are chosen at random. 

\subsection{Adding and removing vertex pairs where one is dangling preserves the nullity}
\label{sec:addingvertices}
Consider a setting where one knows a graph $G$ and a set of parties $P$ that satisfy the assumptions of Theorem~\ref{thm:transfer}. One may now extend the graph by connecting first a vertex $u$ in an arbitrary way, and then connecting a vertex $v$ only to $u$. It turns out that, for any choice of non-zero weights, the number of zero eigenvectors does not change in this process. 

We make this precise as follows. For an $(n\times n)$-matrix $A$, let $\eta(A)=n-\rk(A)$ denote the nullity of the matrix.

\begin{lemma}
\label{lem:adding_vertex}
Let $G$ be a graph with a vertex $v$ of degree 1, whose unique neighbour is $u$ ($u \neq v$). Then
\[
\eta\left(A_G\right)=\eta\left(A_{G-\{v,u\}}\right).
\]
\end{lemma}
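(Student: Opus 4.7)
The plan is to compute $\ker(A_G)$ directly through a block decomposition adapted to the vertices $u$ and $v$. Since $v$ has degree $1$ and its unique neighbour $u$ is distinct from $v$, the vertex $v$ carries no self-loop, so its row in $A_G$ has exactly one nonzero entry. Order the basis of $\mathcal{V}$ so that $v$ comes first, $u$ second, and the remaining vertices last. Writing $w = w_{vu} \neq 0$, letting $d_u$ denote the (possibly zero) self-loop weight at $u$, and letting $c$ be the row vector of weights from $u$ to the remaining vertices, the adjacency matrix takes the block form
\begin{equation*}
A_G = \begin{pmatrix} 0 & w & 0 \\ w^* & d_u & c \\ 0 & c^\dagger & A_{G-\{u,v\}} \end{pmatrix}.
\end{equation*}

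Next I would read off the equations defining $\ker(A_G)$. Writing a candidate null vector as $\xi = (\xi_v, \xi_u, \xi_H)$, the top block equation reads $w \xi_u = 0$, which forces $\xi_u = 0$ because $w \neq 0$. With $\xi_u = 0$ the middle block equation reduces to $w^* \xi_v + c\xi_H = 0$, and the bottom block equation reduces to $A_{G-\{u,v\}} \xi_H = 0$. Hence $\xi_H$ may be any element of $\ker(A_{G-\{u,v\}})$, while $\xi_u$ and $\xi_v = -c\xi_H/w^*$ are uniquely determined from it.

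The final step is to package this into a linear isomorphism. The map $\xi \mapsto \xi_H$ is a well-defined linear map $\ker(A_G) \to \ker(A_{G-\{u,v\}})$, and the prescription $\eta \mapsto (-c\eta/w^*,\, 0,\, \eta)$ provides a two-sided linear inverse by the computation above. Taking dimensions yields $\eta(A_G) = \eta(A_{G-\{u,v\}})$, which is the claim.

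There is no real obstacle; the whole point of the argument is recognising that the degree-$1$ hypothesis makes the $v$-row of $A_G$ contain a single nonzero entry, which is what lets $w \neq 0$ instantly pin down $\xi_u$ and collapse the null-space calculation onto the smaller matrix $A_{G-\{u,v\}}$. The value of $d_u$ is irrelevant for the same reason, which is why the identity holds even when $u$ carries a self-loop.
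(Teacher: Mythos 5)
Your proof is correct and follows essentially the same route as the paper's: the same block decomposition with $v$ and $u$ ordered first, the same observation that $w_{vu}\neq 0$ forces the $u$-component of any null vector to vanish, and the same explicit linear isomorphism between $\ker(A_G)$ and $\ker(A_{G-\{u,v\}})$. No gaps.
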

\begin{proof}
Let $\widetilde{G}$ denote the graph $G-\{v,u\}$. Assuming for convenience that $v$ and $u$ are the first and second column of the adjacency matrix $A_G$ respectively, we can write
\[
A_G = \begin{pmatrix} 0 & w_{uv} & 0\\ w_{vu} & w_{uu} & b \\ 0 & b^T & A_{\widetilde{G}} \end{pmatrix}.
\]
We can write any vector $\ket{z}$ as $(z_v,z_u,\widetilde{z})$.
Now $w_{uv}\neq 0$ and
\[
0=A_G \ket{z} = \begin{pmatrix} 0 & w_{uv} & 0\\ w_{vu} & w_{uu} & b \\ 0 & b^T & A_{\widetilde{G}} \end{pmatrix} \begin{pmatrix} z_v \\ z_u \\ \widetilde{z}\\ \end{pmatrix} = 
\begin{pmatrix}
w_{uv} z_u \\ w_{vu}z_v+w_{uu}z_u+b \cdot \widetilde{z} \\
b^T z_u + A_{\widetilde{G}}\widetilde{z}
\end{pmatrix},
\]
implying that $z_u=0$, and hence also $z_v=-\frac1{w_{vu}} b \cdot \widetilde{z}$, and $ A_{\widetilde{G}} \widetilde{z} = 0$. Hence, we get a linear isomorphism $ \ker A_G \to \ker A_{\widetilde{G}} \colon (z_v,z_u,\widetilde{z} ) \mapsto \widetilde{z} $ with inverse $ \widetilde{z} \mapsto (-\frac1{w_{vu}} b \cdot \widetilde{z},0,\widetilde{z}) $. As the nullity is the dimension of the kernel, this shows $ \eta (A_G) = \eta (A_{\widetilde{G}})$.
\end{proof}

Note that in Lemma \ref{lem:adding_vertex}, we did not require the assumption of semi-bipartiteness, although the latter is still required for our adiabatic protocol. We obtain the following corollary.

\begin{corollary}
Suppose $G$ is a semi-bipartite graph with parts $V_1$ and $V_2$ such that $|V_1| = |V_2|+1$. Fix a set of parties $P\subseteq V_1$. Suppose $v$ is a dangling vertex, $v \not\in P$, whose unique neighbour is $u$. Then condition \emph{2} of 
Theorem~\ref{thm:transfer} holds for $G$ if and only if it holds for $G-\{u,v\}$.
\label{thm:dangling_vtx_assumption_2}
\end{corollary}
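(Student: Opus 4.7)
My plan is to reduce Condition~2 to the determinantal Condition~2a via Proposition~\ref{prop:2conditionsequiv}, and then to apply Lemma~\ref{lem:adding_vertex} to each subgraph obtained by deleting a party.

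First, a brief bookkeeping remark. Because $V_1$ is independent and $uv \in E$, the vertices $u,v$ cannot both lie in $V_1$. For Proposition~\ref{prop:2conditionsequiv} to apply to $G' := G - \{u,v\}$, we need $|V_1'| = |V_2'| + 1$, which forces exactly one of $u,v$ to lie in $V_1$. Combined with $v \notin P$, this ensures $P \subseteq V_1'$ provided we also have $u \notin P$; this is automatic in the natural application from Section~\ref{sec:addingvertices}, where $u,v$ are the newly appended vertices, and we include it as a (mild) hypothesis here.

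Second, fix any party $p_j \in P$. Since $p_j \notin \{u,v\}$, the vertex $v$ is still a dangling vertex of $G - p_j$ with unique neighbour $u$. Hence Lemma~\ref{lem:adding_vertex} applied to $G - p_j$ yields
\[
\eta\bigl(A_{G-p_j}\bigr) = \eta\bigl(A_{(G - p_j) - \{u,v\}}\bigr) = \eta\bigl(A_{G' - p_j}\bigr).
\]
Both matrices are square, so this is equivalent to $\det A_{G-p_j} \neq 0 \iff \det A_{G' - p_j} \neq 0$. Third, Proposition~\ref{prop:2conditionsequiv} applies to both $G$ and $G'$ (each semi-bipartite with parts differing in size by one), and converts Condition~2 into Condition~2a, namely the non-vanishing of $\det A_{\cdot - p_j}$ for all $p_j \in P$. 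Chaining the equivalences from step two over all $p_j$ completes the proof.

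The mathematical substance is entirely carried by Lemma~\ref{lem:adding_vertex}; the only care needed is the bookkeeping in step one, making sure the part-size condition $|V_1'| = |V_2'| + 1$ is preserved so that Proposition~\ref{prop:2conditionsequiv} may be invoked on both sides. This is the only spot where a novice might get stuck, but once the correct placement of $u$ and $v$ in the bipartition is pinned down, the rest is routine.
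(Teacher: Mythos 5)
Your proof is correct and follows essentially the same route as the paper's: reduce condition \emph{2} to the determinantal condition \emph{2a} via Proposition~\ref{prop:2conditionsequiv}, then apply Lemma~\ref{lem:adding_vertex} to each $G-p_j$ (in which $v$ remains dangling with unique neighbour $u$) to transfer the non-vanishing of $\det(A_{G-p_j})$ between $G$ and $G-\{u,v\}$. Your bookkeeping in step one --- pinning down that exactly one of $u,v$ lies in $V_1$ so that $|V_1'|=|V_2'|+1$ is preserved, and that $u\notin P$ must be assumed --- is in fact slightly more careful than the paper, which leaves these points tacit.
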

\begin{proof}
Recall that one of the two equivalent statements of condition \emph{2} is that $\det(A_{G-p_j}) \neq 0$, for all $p_j \in P$. Let us first assume that $G$ satisfies this condition. Because $\det(A_{G-p_j}) \neq 0$, the nullity of $A_{G-p_j}$ is non-zero. By Lemma~\ref{lem:adding_vertex}, the nullity of $A_{G-\{u,v\})-p_j}$ is also non-zero, hence it has a non-zero determinant and thus satisfies condition \emph{2a} as well. 
The same reasoning also proves the other direction.
\end{proof}

Corollary~\ref{thm:dangling_vtx_assumption_2} shows that new viable graphs can be generated by adding or removing vertices from existing graphs that are already known satisfy the assumptions of Theorem~\ref{thm:transfer}. When adding vertices, one may first connect a vertex $u$ in \emph{any} way, as long as the semi-bipartiteness is not violated, and then attach a vertex $v$ only to $u$. When removing vertices, one must find a dangling vertex $v$ and remove it together with its neighbour $u$, as long as the connectedness is preserved. On graphs generated this way, the requirements of Theorem \ref{thm:transfer} can be guaranteed.

When adding new vertices to a graph this way, it may also be possible to add the new vertices to the set of parties $P$, under the following conditions. It is never possible to add a vertex $u\in V_1$ to the set $P$ when $u$ is adjacent to a dangling vertex. For a new dangling vertex $v \in V_1$ that is to be added to the set $P$, assumption \emph{2b} requires that the zero eigenvector $z$ of the new adjacency matrix has nonzero amplitude on $v$. 
From the proof of Lemma~\ref{lem:adding_vertex}, we see that we require $0\neq z_v=-\frac1{w_{vu}} b \cdot \widetilde{z}$, where $b$ is a vector containing the weights of $u$ to the original vertices (excluding $u$ and $v$), and $\widetilde{z}$ is the original zero eigenvector.

Below, we give two example of new families of graphs that allow adiabatic transfer. Various examples of viable graphs are also depicted in Fig.~\ref{fig:gapscaling}.
\begin{example}[Subdivided trees]
\label{ex:trees}
Let $T = (V_T,E_T)$ be any tree. We define the subdivided tree $\widetilde{T} = (V_{\widetilde{T}},E_{\widetilde{T}})$ by replacing every edge by two edges and a vertex: the new vertex set $V_{\widetilde{T}} = V_T \sqcup E_T$ is given by the vertices and edges of $T$, and the edge set $E_{\widetilde{T}} = \{\{v,e\}:v\in V_T,e\in E_T,v\in e\}$ consists of edges that connect each vertex $v \in V_T$ to its incident edges $e \in E_T$. An example of such a subdivided tree is shown in Fig.~\ref{fig:tree}. The decomposition $V_{\widetilde{T}} = V_T \sqcup E_T $ guarantees that $\widetilde{T}$ is a bipartite graph, and since $T$ is a tree we find the vertex classes satisfy $ |V_T| = |E_T| + 1 $. Moreover, we can iteratively remove leaves from the tree to reduce to a single vertex or single edge, showing that any $\widetilde{T}$ constructed this way satisfies the conditions of Theorem~\ref{thm:transfer}.
\end{example}

\begin{example}[Hexagonal grids]
Hexagonal grids can be constructed from two-vertex unit cells that are all oriented in the same direction. These grids are bipartite, with each unit cell containing one vertex from $V_1$ and one from $V_2$. If we start with a single vertex, and keep attaching unit cells in a hexagonal grid pattern such that one of the newly attached vertices is dangling, then each of the grids constructed this way satisfies the conditions of Theorem~\ref{thm:transfer}.
\end{example}

\subsection{Graphs with certain matchings make the protocol work almost surely}
\label{sec:perfmatching}

A \emph{perfect matching} in a graph $G$ is a set of disjoint edges that covers all the vertices.
In this section, we show that on semi-bipartite graphs $G$ where $G-\{ p_i\}$ has a perfect matching for all $i$, taking arbitary weights from a continuous distribution results in an interaction graph that satisfies the conditions of Theorem~\ref{thm:transfer} with probability one. This gives another way to generate a large class of graphs on which the adiabatic transfer protocol works.\par

\begin{theorem}
\label{thm:random_weights}
Let $G = (V,E,w)$ be a weighted semi-bipartite graph with parts $V_1$ and $V_2$ where $|V_1| = |V_2| + 1$. Let $P = \{p_j\}_{j=1}^k \subseteq V_1$. Suppose that for all $i$ there exists a perfect matching in $G-p_i$. Then, if weights $w_{uv}$ are chosen randomly from a contiuous distibution (meaning that no value has positive probability) for all $uv\in E$, we find $\det(A_{G-p_j}) \neq 0$ for all $p_j$ with probability $1$.
\end{theorem}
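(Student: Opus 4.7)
The plan is to regard $\det(A_{G-p_j})$, for each $j$, as a polynomial in the edge weights, show that this polynomial is not identically zero by exhibiting the perfect matching as an explicit nonvanishing witness, and then invoke the classical fact that a nonzero polynomial vanishes on a measure-zero set. A union bound over the finitely many parties will complete the argument.

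First I fix $p_j \in P$ and view the weights $\{w_e\}_{e\in E(G-p_j)}$ as real indeterminates (the Hermitian case is addressed at the end). By the Leibniz formula,
\begin{align*}
q_j(w) := \det(A_{G-p_j}) = \sum_{\sigma} \operatorname{sgn}(\sigma)\prod_i (A_{G-p_j})_{i,\sigma(i)},
\end{align*}
so $q_j$ is a polynomial in the $w_e$. To certify $q_j\not\equiv 0$ I use the perfect matching $M_j$ of $G-p_j$ guaranteed by hypothesis. Because $|V_1\setminus\{p_j\}| = |V_2|$ and $V_1$ is independent, a simple counting argument shows any perfect matching of $G-p_j$ must consist entirely of edges running between $V_1\setminus\{p_j\}$ and $V_2$. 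Evaluate $q_j$ at the assignment $w_e = 1$ for $e\in M_j$ and $w_e = 0$ otherwise. Reordering the basis so that matched pairs are consecutive, the matrix becomes block-diagonal with $|V_2|$ blocks of the form $\bigl(\begin{smallmatrix} 0 & 1 \\ 1 & 0 \end{smallmatrix}\bigr)$, each with determinant $-1$. Hence $q_j$ evaluated at this assignment equals $(-1)^{|V_2|}\neq 0$, proving that $q_j$ is a nontrivial polynomial in the edge weights.

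The second step is the measure-theoretic conclusion, essentially a Schwartz--Zippel-style lemma: if $q\not\equiv 0$ is a polynomial in $n$ variables and $X_1,\dotsc,X_n$ are independent random variables each with an atomless distribution, then $\Pr[q(X_1,\dotsc,X_n) = 0] = 0$. The proof is a standard induction on $n$: writing $q = \sum_\ell a_\ell(X_1,\dotsc,X_{n-1})X_n^\ell$ with top coefficient $a_d\not\equiv 0$, the inductive hypothesis gives $a_d(X_1,\dotsc,X_{n-1}) \neq 0$ almost surely, and conditional on this event the remaining univariate polynomial in $X_n$ has only finitely many roots, which $X_n$ avoids almost surely by atomlessness. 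Applying this to each $q_j$ and union-bounding over the $k$ parties in $P$ yields $\det(A_{G-p_j})\neq 0$ simultaneously for all $p_j$ with probability $1$.

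The only mild obstacle is the Hermiticity constraint $w_{uv} = w_{vu}^*$, which makes the matrix entries not algebraically independent. In the real symmetric case each unordered edge contributes a single indeterminate and the argument above applies verbatim. In the complex Hermitian case one parametrizes each off-diagonal weight by its real and imaginary parts as two independent atomless random variables, viewing $q_j$ as a polynomial in these real parameters; since the witness assignment above uses only real values it still produces a nonzero value, so $q_j$ remains a nontrivial polynomial in the enlarged variable set and the same almost-sure conclusion follows.
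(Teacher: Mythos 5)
Your proof is correct, but it takes a genuinely different route from the paper's. The paper permutes rows and columns along the matching and runs an induction on the leading $2\ell\times 2\ell$ principal submatrices: at each step the determinant is a quadratic polynomial in the newly exposed matched weight $w_{u_{\ell+1}v_{\ell+1}}$ whose leading coefficient is $\det(A_\ell)\neq 0$ by the inductive hypothesis, so it vanishes for at most two values of that weight, which are avoided almost surely. You instead establish the single global fact that $\det(A_{G-p_j})$ is a not-identically-zero polynomial in the edge weights, using the perfect matching as an explicit evaluation witness (the standard Edmonds-style matching/determinant argument), and then invoke a general ``nonzero polynomial vanishes with probability zero under independent atomless inputs'' lemma, whose induction on the number of variables you sketch correctly. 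Both arguments rest on the same core observations --- the bipartite structure with $|V_1\setminus\{p_j\}|=|V_2|$ forces the matching to use only $V_1$--$V_2$ edges, and the matching forces the determinant polynomial to be nontrivial --- but your packaging is cleaner and more modular. What the paper's step-by-step induction buys in exchange is that it extends directly to the quantitative lower bound on $|\det(A_{G-p})|$ in Theorem~\ref{lem:robustweights} of Appendix~\ref{appendix:gap}, which tracks how much the determinant can shrink at each step; your global argument yields no such bound. Your explicit treatment of the Hermitian case is more careful than the paper's (which silently samples real weights from $[0,1]$); the only caveat is that your reduction needs the real and imaginary parts to be jointly nondegenerate (e.g., independent and atomless, as you assume), since a merely atomless distribution on $\mathbb{C}$ could be supported on the real zero set of a two-variable polynomial.
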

Note that the theorem exactly gives us condition \emph{2a} required by the protocol.

\begin{proof}
It suffices to prove that $\det(A_{G-p_i})\neq 0$ with probability $1$ for a fixed $i\in \{1,\dots,k\}$; the claim of the theorem then follows since a countable intersection of events with probability $1$ still has probability $1$.

Let $p=p_i$ be given. We will first permute the rows and columns of the matrix $A_{G-p}$ to bring it in a convenient form; such a permutation only affects the determinant of the matrix by a sign, which is irrelevant to us. 

By assumption, there is a perfect matching on the graph $G-p$. Since $|V_1 \setminus \{ p \} |=|V_2|$ and there are no edges within $V_1$, any perfect matching must use only edges between $V_1$ and $V_2$. Let $u_1v_1,\dots,u_kv_k\in E\cap (V_1\times V_2)$ denote the edges given in a perfect matching on $G-p$. Permute the rows and columns such that the rows are in the order $u_1,v_1,u_2,v_2,\dots$ and the columns are in the order $v_1,u_1,v_2,u_2,\dots$.
We show with an inductive argument that for all $\ell\in \{1,\dots,k\}$, the matrix $A_\ell$ on the first $2\ell$ rows and columns has non-zero determinant with probability $1$. This proves the claim.

For $\ell =1$, we consider \[
\det\begin{pmatrix} w_{u_1v_1} & 0\\w_{v_1v_1}&w_{u_1v_1}\end{pmatrix}=w_{u_1v_1}^2,
\]
since $w_{u_1u_1}=0$ as $u_1\in V_1$. As  $w_{v_1v_1}$ is sampled uniformly at random  from $[0,1]$, this is non-zero with probability $1$. Now suppose we have shown the statement up to some $\ell$. We find
\[
\det\begin{pmatrix}A_\ell & b_1 & b_2\\
d_1 & w_{u_{\ell+1}v_{\ell+1}}&0 \\
d_2 & w_{v_{\ell+1}v_{\ell+1}}& w_{u_{\ell+1}v_{\ell+1}}\\
\end{pmatrix}=\det(A_\ell)w_{u_{\ell+1}v_{\ell+1}}^2 + b w_{u_{\ell+1}v_{\ell+1}} + c
\]
for some $b$ and $c$ which do not depend on $w_{u_{\ell+1}v_{\ell+1}}$, and where we may assume that $\det(A_\ell)\neq 0$. Since the other entries do not depend on $w_{u_{\ell+1}v_{\ell+1}}$ and this gets sampled independently of the other entries, we may view $\det(A_\ell),b$ and $c$ as constants. Since there are at most two possible values in $[0,1]$ which make a quadratic polynomial $ax^2+bx+c$ equal to zero (if $a\neq 0$), with probability $1$ the expression will be non-zero. Continuing until $\ell+1=k$, we conclude $\det(A_{G-p})\neq 0$ with probability $1$ as desired.
\end{proof}
\begin{remark}
\label{rem:generalisation_random_weights}
From the proof, it follows that the assumptions in Theorem \ref{thm:random_weights} can be relaxed: the requirement that the weights are chosen from a continuous distribution is only necessary for the edges involved in the matching. 

In fact, it is possible to show that the adjacency matrix of $G$ is equivalent to a matrix with non-zero entries on the diagonal if and only if there is a perfect matching. Limited generalisation is also possible to non-bipartite graphs. 
\end{remark}

The proof of Theorem~\ref{thm:random_weights} also suggests a (weak) lower bound on the determinant $\det(A_{G-p_i})$ with some probability, and hence on the eigenvalue gap of $A_G$. We elaborate on this in Appendix~\ref{appendix:gap}. 

\section{Numerics}
\label{sec:numerics}

\begin{figure}
\centering
\includegraphics[width=.73\linewidth]{}
\caption{Scaling of the eigenvalue gap $\Delta E$ between the unique zero eigenvalue and the closest other eigenvalue, on a log-log scale. These are calculated for various bipartite graphs of various sizes $|V|$. The annotation (random) indicates that the weights were randomly chosen in the interval [0,2] to guarantee a unique zero eigenvector. The lower dashed line indicates $\Delta E = 1/|V|$, and the upper dashed line follows $\Delta E = 10/\sqrt{|V|}$. Interestingly, for most of the graphs we study, the gaps decay scales proportional to $1/|V|$ or better. Hexagonal grids are an exception, as these are found to decay superpolynomially. Reproduced from \emph{Quantum protocols for few-qubit devices}, ILLC Dissertation Series (University of Amsterdam, 2020).}
\label{fig:gapscaling}
\end{figure}

Our main result in Theorem \ref{thm:transfer} states merely that adiabatic transfer is possible at \emph{some} timescale, to which we remained agnostic. Especially the randomly-weighted graphs with perfect matchings in Sec.~\ref{sec:perfmatching} potentially give rise to a configurations with a very small energy gap, giving rise to long transfer times $T$. An in-depth study of the gap between the zero eigenvalue and the next on semi-bipartite graphs is left as an open problem, but to give \emph{some} indication of the quantitative behavour of our protocol, we resort to numerics. First, we calculate the scaling of the energy gap for various graphs. After that, we consider fidelity of transfer in subdivided trees of various depths. 

Fig.~\ref{fig:gapscaling} depicts the scaling of the energy gap around the zero energy state, as a function of the number of vertices $|V|$, for various types of graphs. 
For most graphs, we consider the unweighted versions, setting $w_{uv} = 1$ whenever the corresponding edge is present. Some graphs have the annotation `random', which means that the graphs typically do not have a unique zero eigenvalue when all weights equal one; we then ensure a unique zero eigenvector by multiplying each weight $w_{uv}$ with a random number chosen independently and uniformly chosen between $0$ and $2$. We took the average energy gap over 50 such perturbations. The precise details of the specific graphs we generated can be found in Appendix~\ref{sec:numericsdetails}.

These results show that the energy gap often decays roughly as $\Delta E \propto |V|^{-1}$ or better, similarly to conventional STIRAP over a linear chain, with hexagonal grids being an exception.

\begin{figure}
\includegraphics[width=\linewidth]{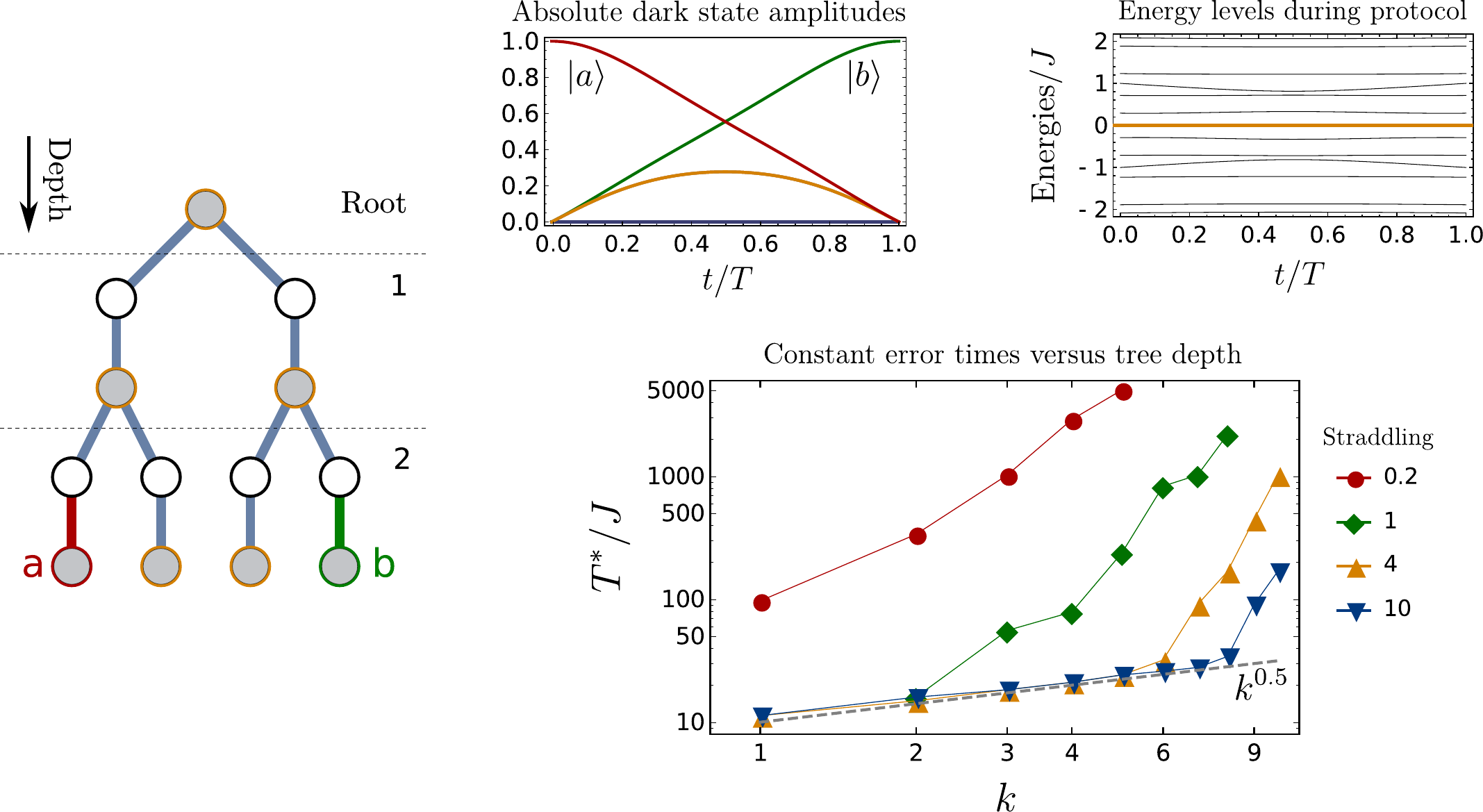}
\caption{Simulation results on tree graphs are presented.  A tree of depth $k=2$ is shown on the top-left, with receivers $a$ and $b$ maximally separated. The top-right shows the ideal state evolution over time, and the energy levels during the protocol. The times $T^*$ required for constant fidelity increase steeply with the exponential size $|V|$ of the graph (bottom), except when sufficiently strong straddling is applied, leading to $T^*\propto k^{0.5}$ (dashed line). Reproduced from \emph{Quantum protocols for few-qubit devices}, ILLC Dissertation Series (University of Amsterdam, 2020).}
\label{fig:tree}
\end{figure}

To assess the actual accuracy of our protocol, we numerically simulate the time evolution of a transfered state. As graphs, we choose \emph{binary} trees of depth $k$, as these allow transfer between a large number of parties. To guarantee that requirement \emph{1} is always fulfilled, we use the subdivision procedure in Example~\ref{ex:trees}, putting a vertex on each edge. This leads to a graph as shown in Fig.~\ref{fig:tree}. The possible communicating parties $P$ are chosen to be the leaves (endpoints) of the tree, allowing $|P| = 2^k$ parties to be connected.  The actual transfer takes place between parties $a$ and $b$ which are at maximum distance from each other.

We define the transfer error as $\mathcal{E} = 1-| \bra{b} U_T \ket{a}|$, where $U_T$ denotes the unitary time-evolution operator as found by numerically solving Schr\"{o}dinger's Equation, and $T$ is the total protocol's time. We choose simple time-dependent couplings $f_a = J t/T$ and $f_b = J ( 1-t/T )$, while all other controls remain $f_v = 1$. Moreover, we define $T^*$ as the lowest time for which $\mathcal{E} < 0.05$, setting a bar for transfer with $95\%$ fidelity. 

Owing to the exponentially large size $|V|$ of the graphs, the time required rapidly increases with $k$ (Fig.~\ref{fig:tree}). Interestingly, we find that the technique of straddling  \cite{Malinovsky1997,Greentree2004}, in which all controls $f_v$ except for $f_a$ and $f_b$ are multiplied by a factor $s$, flattens the scaling down to roughly $T^* \approx 10 \sqrt{k}$, up to a certain $k$ where the steep increase is observed again. Ref. \cite{Greentree2005} already predicted a favorable scaling $T^* \propto \sqrt{n}$ for linear chains of length $n$ in the strong straddling limit. It is surprising that here, we find a similar scaling in $k$ rather than $n$, even though the number of vertices increases exponentially in $k$.
 
There are various reasons to believe that the strong straddling scaling cannot remain valid for increasingly large systems, for example due to Lieb-Robinson bounds \cite{Lieb1972}. Still, with a modest straddling factor $s = 10$, transfer at favorable scaling $T^*\propto \sqrt{\log(|P|)}$ is observed for graphs of up to $1000$ sites, showing that near-term experiments can benefit from this effect.

\section{Conclusion}
\label{sec:conclusion}
To summarize, we extend the set of graphs in which STIRAP-like protocols are known to work. The sufficient requirements are made precise in assumptions \emph{1} and \emph{2}, which can be guaranteed using the techniques in Sec.~\ref{sec:viablegraphs}. We inherit the most important properties of the conventional protocols: the adiabatic controls do not require precise amplitudes or timings, the system's energy is \emph{exactly} zero at all times, and the fidelity is largely insensitive to decay on sites in $V_2$. Various extensions, such as straddling and multi-party transfer, can be readily incorporated. In the studied example of tree-shaped graphs, we find that with mild straddling the fidelities are much better than naively expected. 

As our requirements are sufficient but not necessary, we would be interested to see further work explore other graphs with unique zero eigenstates, and give guarantees on spectral gaps around the zero eigenvalue for specific graphs. Moreover, we look forward to seeing state-of-the-art experiments test our results in practice.

\section{Acknowledgments}
We thank Andrew Greentree for inspiring discussion, and Kareljan Schoutens for discussions and feedback on the manuscript. We also thank the anonymous referee for useful remarks on both the content and the presentation of the text. R.K. was supported by VICI grant 639.033.211 of the Netherlands Organization for Scientific Research and by the Max Planck Gesellschaft. K.G. was supported by the QM\&QI grant of the University of Amsterdam, supporting QuSoft. \\

\newpage 
\appendix

\section{On the eigenvalue gap around $0$}
\label{appendix:gap}
The eigenvalue gap between the ground state and first excited state is an active field of research. The gap between a zero eigenvalue and the nearest non-zero eigenvalue seems to have received significantly less interest. Here, we present some thoughts that could be useful in characterizing the gap around $0$: firstly, on estimating the determinant when weights of a perfectly matchable graph are chosen at random, and secondly, by using Cauchy's interlacing theorem of eigenvalues.

\subsection{Robustness guarantees using the determinant}
\label{sec:robustness}

When weights $w_{uv}$ are chosen from $[0,1]$, all eigenvalues of $A_G$ satisfy $|\lambda|\leq d_{\text{max}}(G)$ for $d_{\text{max}}(G)$ the maximum degree of $G$. 
Since the determinant is the product of the eigenvalues, this gives the lower bound 
$|\lambda|\geq \frac{\det(A_G)}{d_{\text{max}}(G)^{n-1}}$
for a graph $G$ on $n$ vertices. Hence, the lower bound on the determinant of $A_G$ also gives a lower bound on the smallest eigenvalue. 

Moreover, a lower bound on the determinants $\det(A_{G-p_i})$ gives the robustness guarantee that our protocol will keep working even if the weights cannot be held exactly at the aimed value. More precisely, if $|\det(A_{G-p_i})|>\epsilon$ for some $\epsilon>0$, then by continuity of the determinant, this remains true even if the entries of $A_{G-v}$ (that is, the weights on the edges) get permuted by at most some $\delta$. Since the determinant is a polynomial, we may expect $\delta$ to be of a similar scale to $\epsilon$. This implies that the uniqueness of the zero eigenvector would be guaranteed even if the weights of the edges are slightly perturbed. Note that the weights on the edges adjacent to $p_i$ do not affect the determinant at all. 

The proof of Theorem~\ref{thm:random_weights} extends to give a weak lower bound on the determinant. 
\begin{theorem}
\label{lem:robustweights}
Let $G$ be a semi-bipartite graph on parts $V_1$ and $V_2$ with a perfect matching $u_1v_1,\dots,u_\ell v_\ell$. Suppose the weights on some edges of $G$, including the $w_{u_iv_i}$, are chosen independently and uniformly at random from $[0,1]$. Then with probability at least $(\frac12)^{\ell}$, we have $|\det(A_G)|>(\frac12)^{3\ell-1}$.
\end{theorem}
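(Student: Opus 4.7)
The plan is to induct on $\ell$, mirroring the inductive structure of the proof of Theorem~\ref{thm:random_weights}, but tracking a quantitative lower bound on the determinant at each step rather than merely nonvanishing. After reordering rows and columns of $A_G$ according to the matching (which only affects the determinant by a sign), let $A_j$ denote the upper-left $2j\times 2j$ submatrix for $j=1,\ldots,\ell$. I would establish the following inductive statement at level $j$: with probability at least $(1/2)^j$, $|\det(A_j)| > (1/2)^{3j-1}$; taking $j=\ell$ gives the theorem.

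For the base case $j=1$, $\det(A_1) = w_{u_1v_1}^2$, so $\{|\det(A_1)| > 1/4\} = \{w_{u_1v_1} > 1/2\}$, which has probability exactly $1/2$. For the inductive step, I would condition on the event $E_j = \{|\det(A_j)| > (1/2)^{3j-1}\}$ together with all random weights outside of $w := w_{u_{j+1}v_{j+1}}$. As in the existing proof, $\det(A_{j+1}) = \det(A_j)\,w^2 + bw + c$, with $b, c$ now constants, and the task reduces to the conditional bound $\Pr_{w\sim U[0,1]}\bigl[|\det(A_{j+1})| > (1/2)^{3(j+1)-1}\bigr] \geq 1/2$; chaining across $j$ then multiplies conditional probabilities to yield the overall $(1/2)^\ell$.

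The main obstacle is this anti-concentration estimate: for $p(w) = aw^2 + bw + c$ with $|a|\geq\alpha$, one wants $\Pr_w[|p(w)| \geq \alpha/8] \geq 1/2$. The straightforward bound from factoring $p = a(w-r_1)(w-r_2)$ and avoiding $\sqrt{\epsilon/|a|}$-neighborhoods of both roots only gives $\Pr[|p(w)| < \epsilon] \leq 4\sqrt{\epsilon/|a|}$, which forces $\epsilon \leq |a|/64$ and yields only the weaker conclusion $|\det(A_\ell)| > (1/2)^{6\ell-4}$. To close this gap, I would exploit the block structure in Eq.~\ref{eq:bipgraph}: the same row-swap argument underlying the proof of Proposition~\ref{prop:2conditionsequiv} shows that $|\det(A_G)| = |\det(B)|^2$, where $B$ is the $\ell \times \ell$ off-diagonal block of $A_G$. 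Recasting the induction on $|\det(B_j)|$ directly, the expansion $\det(B_{j+1}) = w \cdot \det(B_j) + R$ is now \emph{linear} in $w$, so the linear anti-concentration bound $\Pr_w[|aw+R|<\epsilon] \leq 2\epsilon/|a|$ gives the stronger per-step statement $\Pr[|\det(B_{j+1})| \geq |\det(B_j)|/4] \geq 1/2$. Combining across $\ell$ steps gives $|\det(B_\ell)| \geq (1/2)^{2\ell-1}$ and hence $|\det(A_G)| \geq (1/2)^{4\ell-2}$ with probability $(1/2)^\ell$, matching the theorem at $\ell=1$; reaching the stated exponent $3\ell-1$ for larger $\ell$ would require additionally averaging over the random self-loop weights $w_{v_{j+1}v_{j+1}}$, which via a Schur-complement expansion appear as linear shifts of the constant $R$ and so provide independent anti-concentration.
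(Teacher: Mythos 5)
Your skeleton---reorder rows and columns along the matching, induct on the leading $2j\times 2j$ blocks, and pay a factor $\tfrac12$ in probability per step via an anti-concentration estimate in the single fresh weight $w_{u_{j+1}v_{j+1}}$---is exactly the paper's. Your detour through the off-diagonal block is a genuinely different and in fact cleaner way to run the induction: since here $|V_1|=|V_2|=\ell$ and there are no edges inside $V_1$, a block row swap gives $\det(A_G)=\pm\det(B)^2$ with $B$ the $\ell\times\ell$ block of cross-weights in Eq.~\ref{eq:bipgraph}, the new weight enters $\det(B_{j+1})$ only linearly (it sits in a single entry, with cofactor $\det(B_j)$), and the linear anti-concentration bound correctly yields $|\det(B_\ell)|\geq(\tfrac12)^{2\ell-1}$, hence $|\det(A_G)|\geq(\tfrac12)^{4\ell-2}$, with probability at least $(\tfrac12)^{\ell}$. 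That is a valid proof of a quantitatively weaker statement, agreeing with the theorem only at $\ell=1$. The paper instead keeps the quadratic $\det(A_{j+1})=\det(A_j)x^2+bx+c$, completes the square to $a(x+b')^2+c'$, and extracts the per-step factor from the event $|x+b'|\geq\tfrac12$; this is how it loses only $2^3$ per step rather than your $2^4$ (or the $2^6$ of the naive two-root bound you rightly discard).

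The genuine gap is your final sentence. The self-loop weights $w_{v_{j+1}v_{j+1}}$ live in the block $C$ of Eq.~\ref{eq:bipgraph}, and the very identity $\det(A_G)=\pm\det(B)^2$ that you invoke shows that $\det(A_G)$ does not depend on $C$ at all: there is no Schur-complement contribution from the self-loops left to average over, so this mechanism provably cannot recover the exponent $3\ell-1$. (Independently, the theorem's hypothesis only forces the matching weights $w_{u_iv_i}$ to be random; the self-loops need not be, so their randomness could not be relied upon even if they did enter.) To reach $(\tfrac12)^{3\ell-1}$ you need a better-than-$\tfrac14$ per-step gain from the single uniform variable, which is what the paper's completed-square step $a(x+b')^2\geq a/4$ on $\{|x+b'|\geq\tfrac12\}$ is designed to deliver---though you should note that this step is itself delicate, since for $b'$ strictly between $-1$ and $0$ the event $|x+b'|\geq\tfrac12$ has probability strictly less than $\tfrac12$ for $x$ uniform on $[0,1]$. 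As it stands, your argument proves the theorem with $3\ell-1$ replaced by $4\ell-2$, and the proposed route to close the remaining gap is a dead end.
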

\begin{proof}
We may assume there are no edges within $V_1$. (This assumption can be left out but makes the analysis easier.)
As in the proof of Theorem~\ref{thm:random_weights}, we reorder the columns to $u_1,v_1,u_2,v_2,\dots$ and the rows to $v_1,u_1,v_2,u_2,\dots$ and prove the claim for all submatrices $A_\ell$ spanned by the first $2\ell$ rows and columns for all $\ell\in \{1,\dots,k\}$.  

The statement is true for $\ell=1$: $\det(A_1)=w_{u_1v_1}^2>\frac14$ with probability at least $\frac12$. 
Suppose now that $|\det(A_\ell)|\geq \left(\frac12\right)^{3\ell-1}$ with probability at least $\left(\frac12\right)^{\ell}$ for some $\ell$. Again, we find
\[
\det\begin{pmatrix}A_\ell & b_1 & b_2\\
d_1 & w_{u_{\ell+1}v_{\ell+1}}&0 \\
d_2 & w_{v_{\ell+1}v_{\ell+1}}& w_{u_{\ell+1}v_{\ell+1}}\\
\end{pmatrix}=\det(A_\ell)w_{u_{\ell+1}v_{\ell+1}}^2 + b w_{u_{\ell+1}v_{\ell+1}} + c
\]
takes the form $ax^2+bx+c$, where $a,b,c$ do not depend on $x=w_{u_{\ell+1}v_{\ell+1}}$ and can hence be viewed as constants by the independence assumption. By the induction hypothesis, $|a|\geq \left(\frac12\right)^{3\ell-1}$ with probability at least $\left(\frac12\right)^{\ell}$.

We can rewrite $ax^2+bx+c=a(x+b')^2+c'$ for possibly different values $b',c'$. Then $|a(x+b')^2+c'|\leq \left(\frac12\right)^{3(\ell+1)-1}$ if and only if $a(x+b')^2\in \left(-c'-\left(\frac12\right)^{3(\ell+1)-1},-c'+\left(\frac12\right)^{3(\ell+1)-1}\right)$. The probability of this happening is maximized when $b'=-\frac12$, $|c'|=\left(\frac12\right)^{3(\ell+1)-1}$ and the sign of $c'$ and $a$ are different; we may assume $a>0$ as the other case is analogous. In this case the interval is $\left( 0, 2 \left(\frac{1}{2}\right)^{3(\ell +1)-1} \right) = \left( 0,\left( \frac{1}{2} \right)^{3\ell+1}\right)$.
We find $a\geq \left(\frac12\right)^{3\ell-1}$ with probability at least $\left(\frac12\right)^{\ell}$, in which case independently with probability $\frac12$ we have $|x+b'|\geq \frac12$. Hence with probability at least $\left(\frac12\right)^{\ell+1}$, we find
\begin{equation*}
a(x+b')^2\geq \left(\frac12\right)^{3\ell-1}\left(\frac{1}{2}\right)^2 = \left(\frac12\right)^{3\ell+1} \notin \Big( 0,\Big( \frac{1}{2} \Big)^{3\ell+1}\Big).\qedhere
\end{equation*}
\end{proof}
We cannot hope to do much better than the result above. Consider the case in which $A=\text{diag}(a_1,a_1,\dots,a_k,a_k)$ is a diagonal matrix, such that $\det(A)=a_1^2\cdots a_k^2$ where the $a_i$ get chosen independently and uniformly at random. Using the law of large numbers or the Central Limit Theorem and the fact that $-\log(U(0,1))\sim \text{Exp}(1)$, it follows that $a_1\cdots a_k$ is concentrated around $\left(\frac1{e}\right)^{k}$. In fact, one can prove using Chernoff bounds \cite{chernoff} that
\[
\mathbb{P}\left(a_1\cdots a_k\geq \left(0.5^{2/3}\right)^{k}\right)\leq e^{-(1/144)k}.
\]
Hence without further assumptions, we cannot hope to improve the exponential decay in the lower bound of the theorem.

\subsection{Interlaced eigenvalues}
\label{sec:interlaced}

We can obtain a lower bound on the eigenvalue gap using the following result, which follows from the fact that $A_{G-p}$ is a principal submatrix of $A_G$ \cite{matrixtheory}. 
\begin{theorem}[Cauchy interlacing theorem]
Let $G$ be a graph with a vertex $p$. Let $\lambda_1\leq \dots\leq \lambda_{n+1}$ be the eigenvalues of $A_G$ and $\mu_1\leq \dots\leq \mu_n$ the eigenvalues of $A_{G-p}$. Then
\[
\lambda_1\leq \mu_1\leq \lambda_2\leq \mu_2\leq \dots\leq \lambda_{n}\leq\mu_n\leq \lambda_{n+1}.
\]
\end{theorem}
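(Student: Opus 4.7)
The statement is the classical Cauchy interlacing theorem for the eigenvalues of a Hermitian matrix and its principal submatrix; my plan is to reduce it to the Courant--Fischer min--max characterization of eigenvalues. Writing the adjacency matrix in block form with the row and column corresponding to $p$ placed last gives
\[
A_G = \begin{pmatrix} A_{G-p} & b \\ b^* & c \end{pmatrix},
\]
and the key observation is that if $W \subset \mathbb{C}^{n+1}$ denotes the $n$-dimensional subspace of vectors whose $p$-coordinate is zero, then for $x \in W$ one has $x^* A_G x = \widetilde{x}^* A_{G-p} \widetilde{x}$, where $\widetilde{x}$ is obtained from $x$ by deleting the $p$-entry. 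Hence
\[
\mu_k = \min_{\substack{T \subseteq W\\ \dim T = k}} \max_{0 \neq x \in T} \frac{x^* A_G x}{x^* x},
\]
while Courant--Fischer applied to $A_G$ reads
\[
\lambda_k = \min_{\substack{S \subseteq \mathbb{C}^{n+1}\\ \dim S = k}} \max_{0 \neq x \in S} \frac{x^* A_G x}{x^* x}.
\]

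The inequality $\lambda_k \leq \mu_k$ is then immediate, since the minimum defining $\mu_k$ is taken over a proper subfamily of the $k$-dimensional subspaces available to $\lambda_k$, and shrinking the feasible set of a minimisation can only increase its value. For the other half, $\mu_k \leq \lambda_{k+1}$, I would use a hyperplane-intersection argument: given any $(k+1)$-dimensional subspace $S \subseteq \mathbb{C}^{n+1}$, the intersection $S \cap W$ has dimension at least $k$ because $W$ has codimension $1$. Choosing any $k$-dimensional $T' \subseteq S \cap W$ as a competitor in the definition of $\mu_k$ yields
\[
\mu_k \leq \max_{0 \neq x \in T'} \frac{x^* A_G x}{x^* x} \leq \max_{0 \neq x \in S} \frac{x^* A_G x}{x^* x},
\]
and taking the minimum of the right-hand side over $S$ gives $\mu_k \leq \lambda_{k+1}$.

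There is no real obstacle beyond the dimension count in the intersection step, which is routine, and in the paper it would be perfectly reasonable to simply cite the standard reference \cite{matrixtheory}. The value of sketching the proof is to make transparent why the eigenvalue of $A_G$ nearest zero is sandwiched by eigenvalues of $A_{G-p}$, which is precisely the leverage that Appendix~\ref{appendix:gap} exploits to convert knowledge about principal submatrices into a lower bound on the spectral gap of $A_G$ around zero.
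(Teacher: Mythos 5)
Your proof is correct: the Courant--Fischer min--max argument, together with the observation that the quadratic form of $A_G$ restricted to the hyperplane $W$ of vectors vanishing at $p$ coincides with that of the principal submatrix $A_{G-p}$, is the standard proof of Cauchy interlacing, and your dimension count $\dim(S\cap W)\geq k$ is exactly the step that closes the second inequality. The paper gives no proof at all --- it simply cites the result as a consequence of $A_{G-p}$ being a principal submatrix of the Hermitian matrix $A_G$ --- so your write-up just supplies the standard argument behind that citation.
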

In our set-up, one of the $\lambda_i$ will be equal to 0, and the theorem shows that that the gap to the second absolutely smallest eigenvalue is at least $\min_i|\mu_i|$, and hence the eigenvalue gap
\[
\Delta(G) \geq \max_{p\in V_1} \min_{\mu \text{ eigenvalue of }G-p}|\mu|.
\]
Along with a bound on $\det(A_{G-p})$, as considered in the previous subsection, we can use this to obtain a lower bound on the eigenvalue gap of $G$. This bound is very weak, and based on the experiments in Sec.~\ref{sec:numerics} it is our expectation that vastly better bounds can be obtained.

\section{Details of the numerical diagonalization}
\label{sec:numericsdetails}

The details of our numerics on the gap scaling for various graphs are as follows. 

We generate \emph{star graphs} by connecting $k$ `arms', linear chains of length $m$, to a single center vertex.  Interestingly, the eigenvalue gaps do not change as the number of arms increases. We fix the number of arms to 3 and vary the chain lengths to make larger graphs. 

The \emph{hexagonal grids} consist of unit cells of size 2. We take $k^2$ copies of these unit cells and place them on a $k\times k$ square grid, which is connected as indicated in Fig.~\ref{fig:gapscaling}. To enforce an odd number of sites, we remove a single site in the top-right corner, leading to $2k^2 - 1$ sites in total. Interestingly, the hexagonal grids are the only graph configuration we considered whose gap decays superpolynomially (yet slower than an exponential). Randomly perturbing weights does not change this behavior. 

The \emph{square grids} are chosen to have $k$ by $k$ vertices, where $k$ is an odd number. 

The \emph{bipartite graphs} consist of two parts of size $m+1$ and $m$, respectively. Each potential edge which can be laid to connect the two parts is added with probability $p=0.81$. Because these graphs are also random, for each datapoint, we also averaged the gap size over 50 random instantiations of the edge set. The thickness of the line indicates the standard deviation. 

Lastly, the subdivided binary \emph{trees} are generated as in the main text: starting from a complete binary tree of certain depth, we create an additional vertex on each edge, which makes sure that $|V_1| = |V_2| + 1$.

\bibliography{CTAP}

\end{document}